%
\documentclass[lnicst]{svmultln}
\usepackage{makeidx}  
%
\usepackage{mathtools}
\usepackage{amsmath,amsfonts,amssymb}
\usepackage[dvips]{graphicx}

\begin{document}
\mainmatter              
\title{Optimal Power Allocation for OFDM-Based Wire-Tap Channels with Arbitrarily Distributed Inputs}
\titlerunning{Optimal Power Allocation.}  
%
\author{Haohao Qin$^\star$, Yin Sun\thanks{Haohao Qin and Yin Sun contribute equally to this work.}, Xiang Chen, Ming Zhao, Jing Wang}

\authorrunning{Haohao Qin, Yin Sun et al.}   
%
%
\institute{State Key Laboratory on Microwave
and Digital Communications\\
Tsinghua National Laboratory for Information Science and
Technology\\
Department of Electronic Engineering, Tsinghua University, Beijing
100084, P.~R.~China\\
\email{{haohaoqin07, sunyin02, chenxiang98,
zhao.ming29}@gmail.com}\\
\email{wangj@mail.tsinghua.edu.cn} }

\maketitle              

\begin{abstract}        
In this paper, we investigate power allocation that maximizes the
secrecy rate of orthogonal frequency division multiplexing (OFDM)
systems under arbitrarily distributed inputs. Considering commonly
assumed Gaussian inputs are unrealistic,
we focus on secrecy systems with more practical discrete distributed
inputs, such as PSK, QAM, etc. While the secrecy rate achieved by
Gaussian distributed inputs is concave with respect to the transmit
power, we have found and rigorously proved that the secrecy rate is
non-concave under any discrete inputs. Hence, traditional convex
optimization methods are not applicable any more. To address this
non-concave power allocation problem, we propose an efficient
algorithm. Its gap from optimality vanishes asymptotically at the
rate of $O(1/\sqrt{N})$, and its complexity grows in the order of
$O(N)$, where $N$ is the number of sub-carriers. Numerical results
are provided to illustrate the efficacy of the proposed algorithm.

\keywords {OFDM wire-tap channel, arbitrarily distributed inputs,
duality theory, non-convex problem, optimal power allocation}
\end{abstract}

\section{Introduction}
In recent years, many privacy sensitive wireless services, such as
pushmail, mobile wallet, Microblogging, etc, have become more and
more popular. But, due to the broadcast nature of wireless channels,
security problems and challenges are also accompanying with the
growing up of those privacy services.
The security of wireless communications is commonly supported by
cryptographic techniques employed at upper layer. However, this
traditional method faces several challenges, such as the emergence
of new cracking algorithms and increasing computational capability
of eavesdroppers.
Recently, physical layer security, a method that can supplement
upper layer security,  has received considerable attentions
\cite{Liang2009}.

Physical layer security was firstly studied from an
information-theoretic perspective in \cite{Wyner1975}, where the
concept of ``wire-tap channel'' was introduced to illustrate the
channel with three terminals, transmitter, legitimate receiver and
eavesdroppers,
and secrecy rate was defined as the maximum achievable data rate
from the transmitter to its legitimate receiver while keeping the
eavesdropper completely ignorant of the secret massage.
Later, the research in this field was extended to 
various scenes, such as Gaussian wire-tap channel
\cite{Csiszar1978}-\cite{Cheong1978}, multiple input multiple output
(MIMO) channel \cite{Oggier2007}-\cite{Ekrem2009}, orthogonal
frequency division multiplexing (OFDM) channel
\cite{Li06}-\cite{Renna2010}, etc.

Recently, OFDM-based secure communications have obtained much
attention for its capability of countermining the dispersive of
wideband wireless channels and enhance secrecy rate
\cite{Li06}-\cite{Renna2010}. Optimal power allocation of OFDM-based
wire-tap channels is investigated in \cite{Li06}-\cite{Jorwieck08}
under Gaussian inputs. In practical system, due to Gaussian inputs'
infinite peak-to-average ratio, finite discrete
constellations\footnote{The words ``distribution'' and
``constellation'' are used alternatively throughout the paper.},
such as PSK, QAM (see Fig.\ref{fig:SystemModel} (a)), are used
instead. In this paper, we investigate optimal power allocation for
OFDM-based wire-tap channels under arbitrarily distributed channel
inputs.

We have found and rigorously proved that the secrecy rate is
non-concave under any discrete constellations, while the secrecy
rate achieved by Gaussian distributed channel inputs was found to be
concave with respect to the transmit power
\cite{Li06}-\cite{Jorwieck08}. Therefore, the optimal power
allocation strategy for OFDM-based wire-tap channels with Gaussian
inputs \cite{Li06}-\cite{Jorwieck08} is not applicable any more to
the considered problem. To address this non-concave power allocation
problem, we propose an efficient power allocation algorithm. Its gap
from optimality vanishes asymptotically at the rate of
$O(1/\sqrt{N})$, and its complexity grows in the order of $O(N)$,
where $N$ is the number of sub-carriers. Numerical results are
provided to illustrate the efficiency of the proposed algorithm.


The remainder of this paper is organized as follows. Section
\ref{sec:system model and secrecy rate} presents the system model
and power allocation problem is formulated. Optimal power allocation
for arbitrarily distributed inputs is given in Section
\ref{solution}. Numerical results and conclusions are provided in
Section \ref{sec:simulation} and Section \ref{sec:conclu},
respectively.

\section{System Model and Problem Formulation}\label{sec:system model and secrecy rate}

\begin{figure}[t]
\centering
\includegraphics[scale=0.5]{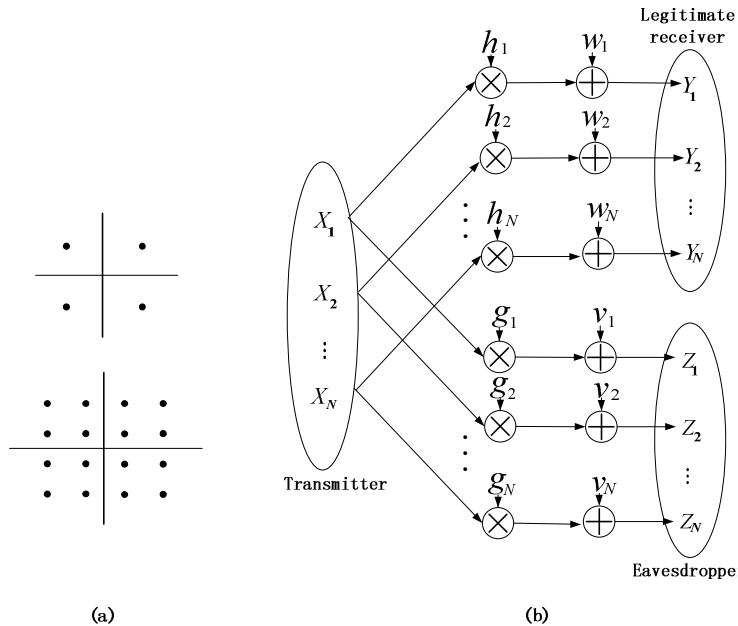}
\caption{(a). Finite discrete constellations: QPSK and 16QAM, (b).
OFDM-based wire-tap channel} \label{fig:SystemModel}
\end{figure}

Consider an OFDM-based wire-tap channel with a transmitter, a
legitimate receiver and an eavesdropper, where the eavesdropper
intends to extract the confidential message transmitted from
transmitter to legitimate receiver (see Fig. \ref{fig:SystemModel}
(b)). There are $N$ sub-carriers and the transmitter's signal in
each sub-carrier follows an arbitrary but predetermined
distribution, which can be either continuous constellations, such as
Gaussian distribution, or finite discrete constellations, including
PSK, QAM, etc (see Fig. \ref{fig:SystemModel} (a)).

The transmitted signal over the $i$th sub-carrier is denoted as
$x_i$, described as
\begin{equation}
x_i=\sqrt{p_i}s_i, i=1,\dots,N,
\end{equation}
where $p_i$ is the power ratio between transmission signal $x_i$ and
the noise, and $s_i$ represents the normalized channel inputs with
predetermined distribution. Then power constraint can be readily
shown to be
\begin{equation}\label{equ:power constraint}
\frac{1}{N}\sum_{i=1}^Np_i\leq P,
\end{equation}
where $P$ is total available transmit power.

The received signals at the legitimate receiver and eavesdropper are
given by
\begin{eqnarray}\label{Y}
&&y_i=h_i\sqrt{p_i}s_i+w_i,~~~~~i=1,\cdots,N,\\\label{Z}
&&z_i=g_i\sqrt{p_i}s_i+v_i,~~~~~~i=1,\cdots,N,
\end{eqnarray}
respectively, where the $w_i$ and $v_i$ are zero-mean complex
Gaussian noises with unit variance; $h_i$ and $g_i$ are the complex
channel coefficients of $i$th sub-carrier. According to the
information theoretical studies of \cite{Li06}, the secrecy rate
from transmitter to its legitimate receiver is
\begin{eqnarray}\label{Cp_S}
&&\sum_{i=1}^{N}[I(s_i;h_i\sqrt{p_i}s_i+w_i)-I(s_i;g_i\sqrt{p_i}s_i+v_i)]^+,
\end{eqnarray}
where $[x]^+ \triangleq \max\{x,0\}$, and $I(x;y)$ denotes the
mutual information between random variables $x$ and $y$. The
expression in (\ref{Cp_S}) is quite illuminating: the secrecy rate
of each sub-channel is non-negetive; if it is positive, it is
exactly the data rate difference of the legitimate and eavesdropping
channels; the total secrecy rate is simply the sum secrecy rate of
all the $N$ sub-carriers.

For fixed constellations of $\{s_i\}_{i=1}^N$, we need to optimize
the power allocation to obtain the maximal secrecy rate. The
optimization problem can be cast as follows, 
\begin{equation}\label{problem1}
\begin{array}{ll}
R^* = \underset{\textbf{p}}{\max}~&R_s(\textbf{p})\triangleq \frac{1}{N} \sum\limits_{i=1}^{N}[I(s_i;h_i\sqrt{p_i}s_i+w_i)-I(s_i;g_i\sqrt{p_i}s_i+v_i)]^+\\[0.2cm]
~~~~~~~~~\textrm{s.t.}~~ &\frac{1}{N} \sum_{i=1}^N{p_i}\leq{P},\\[0.2cm]
&\textbf{p}\geq 0
\end{array}
\end{equation}
where $\textbf{p}\in\mathcal{R}^N$ is transmit power vector for $N$
sub-carriers, i.e., $\textbf{p}=\{p_1,p_2,...,p_N\}$, and $R^*$
denotes the optimal value. For the facility of the following
analysis, let $R_{s,i}(p_i)$ denote the ingredients of
$R_s(\textbf{p})$, i.e.,
\begin{equation}\label{equ:gradients of R}
R_{s,i}(p_i)\triangleq
[I(s_i;h_i\sqrt{p_i}s_i+w_i)-I(s_i;g_i\sqrt{p_i}s_i+v_i)]^+.
\end{equation}

\section{Optimal Power Allocation under Arbitrarily Distributed Channel Inputs}\label{solution}
\begin{figure}[t]
\centering
\includegraphics[scale=0.55]{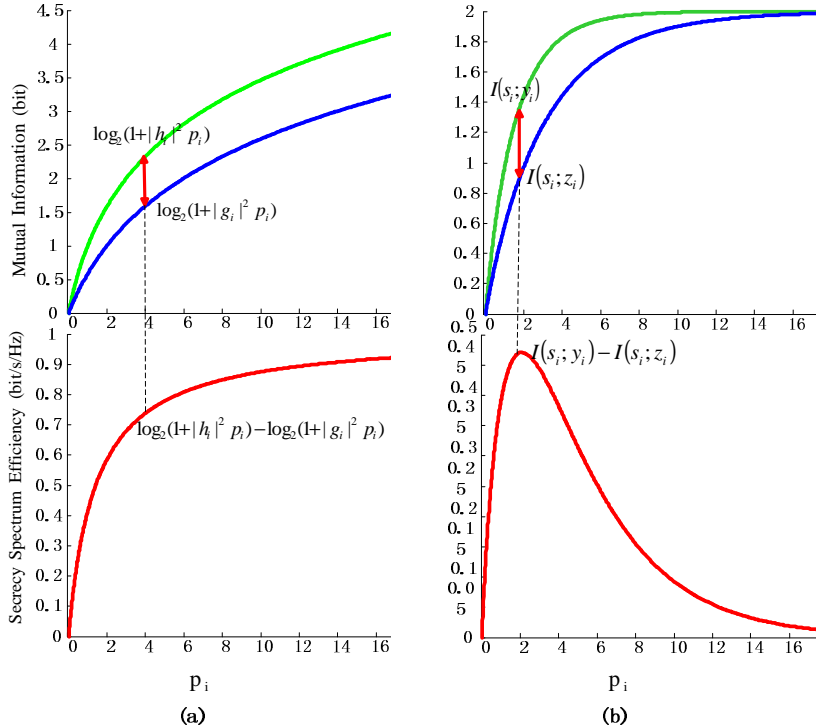}
\caption{(a). Secrecy rate achieved by Gaussian distributed inputs.
(b). Secrecy rate achieved by discrete inputs (eg. QPSK). Here
$h_i\geq g_i$.} \label{fig:I_input}
\end{figure}

\subsection{Non-concavity of the Secrecy Rate $R_s(\textbf{p})$}\label{Sec:secrP}
If $s_i$ follows Gaussian distribution, the secrecy rate
$R_s(\textbf{p})$ in (\ref{problem1}) has explicit expression
\cite{Cheong1978}, i.e.,
\begin{equation}\label{c_gaussian}
R_s^G(\textbf{p})=\frac{1}{N} \sum_{i=1}^N[\;\log_2(1+{|h_i|^2p_i})-\log_2(1+{|g_i|^2p_i})\;]^+.\\
\end{equation}
It can be checked that $R^G_s(\textbf{p})$ is a concave function of
$\textbf{p}$. Hence problem (\ref{problem1}) is a convex
optimization problem. The ingredients of $R^G_s(\textbf{p})$ are
illustrated in the left part of Fig. \ref{fig:I_input}. One can
observe that $\log_2(1+{|h_i|^2p_i})$, $\log_2(1+{|g_i|^2p_i})$ and
$\log_2(1+{|h_i|^2p_i})-\log_2(1+{|g_i|^2p_i})$ are all concave,
provided that $|h_i|^2>|g_i|^2$. In \cite{Li06}-\cite{Jorwieck08},
utilizing the convexity structure of problem (\ref{problem1}),
authors obtained the optimal power allocation shown as follows,
\begin{eqnarray}\label{equ:gaussResult}
p_i^*=\begin{cases}
\frac{-(|h_i|^2+|g_i|^2)+\sqrt{(|h_i|^2+|g_i|^2)^2-4|h_i|^2|g_i|^2\frac{u+|g_i|^2-|h_i|^2}{u}}}{2|h_i|^2|g_i|^2}\\[0.1cm]
\hfill{,\textrm{if}~~|h_i|^2-|g_i|^2>u}\\[0.2cm]
 0 ~~~~~~~~~~~~~~~~~~~~~~~~~~~~~~~~~~ ,\textrm{others},
\end{cases}
\end{eqnarray}
where the Lagrange multiplier $u$ is chosen to meet the power
constraint,
\begin{equation}
\frac{1}{N} \sum_{i=1}^N{p_i}={P}.
\end{equation}

\begin{figure}[t]
\centering
\includegraphics[scale=0.55]{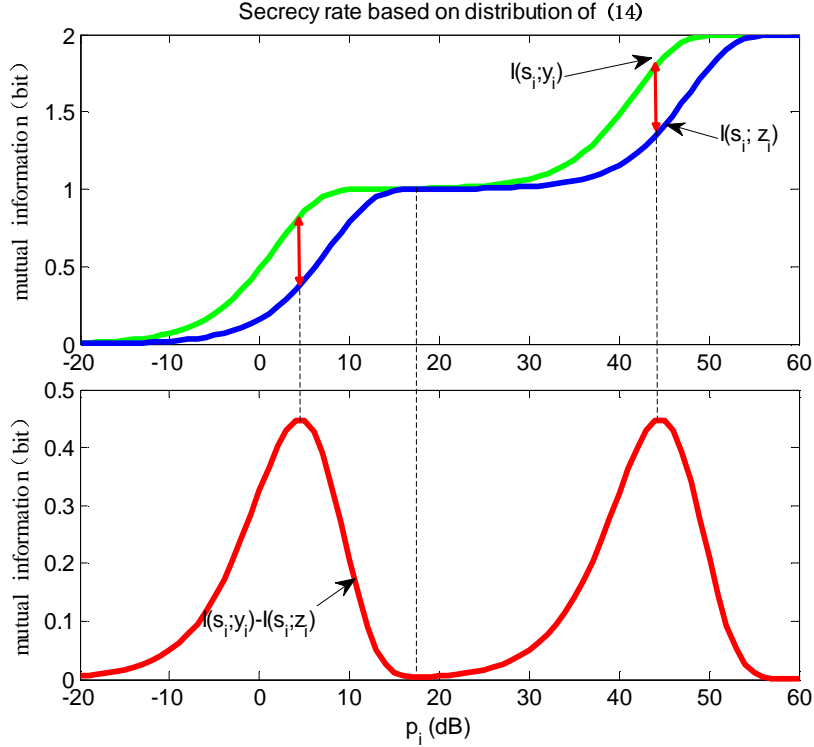}
\caption{Secrecy rate achieved by distribution of (\ref{dis:mess}).}
\label{fig:I_special distributed}
\end{figure}


One may expect that the concavity of $R_s(\textbf{p})$ still holds
under general input distributions. Unfortunately, our investigation
shows that this is not true, which is formally presented in the
following proposition.

\begin{proposition}
The secrecy rate function $R_s(\textbf{p})$ under any finite
discrete constellations is non-concave with respect to \textbf{p}.
\end{proposition}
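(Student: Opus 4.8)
The plan is to exploit the separable structure $R_s(\mathbf{p})=\frac{1}{N}\sum_{i=1}^N R_{s,i}(p_i)$ and reduce the claim to a single sub-carrier. Since the summands depend on disjoint coordinates, restricting $R_s$ to the line obtained by freezing every coordinate except $p_i$ returns a positive multiple of $R_{s,i}(p_i)$ plus a constant, a transformation that preserves (non-)concavity; hence $R_s$ is concave on $\mathcal{R}^N_{\ge 0}$ iff every ingredient $R_{s,i}$ is concave on $[0,\infty)$, and a single non-concave ingredient suffices. I would therefore fix one sub-carrier with $|h_i|^2>|g_i|^2$ (the regime of interest, cf.\ Fig.~\ref{fig:I_input}) and show $R_{s,i}$ is non-concave. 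Because the mutual information of a complex AWGN channel with normalized input depends on the channel only through the SNR $|h_i|^2 p$, I would write $\Psi(\gamma)\triangleq I(s_i;\sqrt{\gamma}\,s_i+n)$ with $n$ unit-variance complex Gaussian noise, so that $f_i(p)\triangleq\Psi(|h_i|^2 p)$ and $g_i(p)\triangleq\Psi(|g_i|^2 p)$. On this sub-carrier monotonicity of $\Psi$ gives $f_i(p)\ge g_i(p)$, so the operator $[\cdot]^+$ in (\ref{equ:gradients of R}) is inactive and $R_{s,i}(p)=\phi_i(p)\triangleq f_i(p)-g_i(p)$.

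Next I would record the two boundary behaviours that distinguish discrete from Gaussian inputs. At zero power the output is pure noise, so $\Psi(0)=0$ and hence $\phi_i(0)=0$. As $\gamma\to\infty$ the input becomes perfectly decodable, so $\Psi(\gamma)\to H(s_i)$, the finite input entropy; this saturation is precisely what fails for Gaussian inputs, where $\Psi(\gamma)=\log_2(1+\gamma)$ diverges. Crucially, the legitimate and eavesdropper links carry the \emph{same} input $s_i$, so both $f_i$ and $g_i$ saturate to the identical value $H(s_i)$, whence $\lim_{p\to\infty}\phi_i(p)=H(s_i)-H(s_i)=0$. Finally, since $\Psi$ is non-decreasing and non-constant while $|h_i|^2>|g_i|^2$, one cannot have $\Psi(|h_i|^2 p)=\Psi(|g_i|^2 p)$ for all $p$ (that would force $\Psi$ constant on $(0,\infty)$, contradicting $\Psi(0^+)=0\ne H(s_i)$); hence $\phi_i(p_1)>0$ for some $p_1>0$.

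The final step turns this ``bump'' shape ($\phi_i(0)=0$, $\phi_i(p_1)>0$, $\phi_i(p)\to 0$) into a contradiction with concavity. Suppose $\phi_i$ were concave on $[0,\infty)$. For any $0<p_1<p_2<p_3$ concavity yields
\begin{equation}
\phi_i(p_2)\ \ge\ \frac{p_3-p_2}{p_3-p_1}\,\phi_i(p_1)+\frac{p_2-p_1}{p_3-p_1}\,\phi_i(p_3).
\end{equation}
Fixing $p_1,p_2$ and letting $p_3\to\infty$, the coefficient of $\phi_i(p_1)$ tends to $1$ while the coefficient of $\phi_i(p_3)$ tends to $0$ and $\phi_i(p_3)\to 0$; thus $\phi_i(p_2)\ge\phi_i(p_1)$ for every $p_2>p_1$. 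So a concave $\phi_i$ with finite limit $0$ at infinity would be non-decreasing, forcing $\phi_i(p)\le 0$ for all $p$ and contradicting $\phi_i(p_1)>0$. Hence $R_{s,i}$ is non-concave, and by the separability reduction $R_s(\mathbf{p})$ is non-concave.

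The main obstacle is the high-SNR step: one must establish rigorously that for an \emph{arbitrary} finite constellation the mutual information saturates at the finite entropy $H(s_i)$, and that it does so at the identical level on both links, so that $\phi_i$ genuinely returns to zero. Everything else is soft—the separability reduction and the three-point concavity argument are elementary—but this vanishing of $\phi_i$ at infinity, which has no analogue in the Gaussian case where the difference tends to the positive constant $\log_2(|h_i|^2/|g_i|^2)$, is the precise structural feature responsible for non-concavity.
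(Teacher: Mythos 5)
Your proof is correct, and while it shares the paper's overall skeleton---establish the ``bump'' shape $R_{s,i}(0)=0$, $R_{s,i}(\hat p)>0$, $R_{s,i}(p)\to 0$ as $p\to\infty$, then show this is incompatible with concavity---it implements both key steps by genuinely different and more elementary means. For strict positivity somewhere, the paper differentiates at $p=0$ using the I-MMSE identity (\ref{equ:derivative}) of \cite{Guo05} to get $R_{s,i}'(0)=|h_i|^2-|g_i|^2>0$, whereas you avoid I-MMSE entirely via a scaling argument: $\Psi(|h_i|^2p)=\Psi(|g_i|^2p)$ for all $p$ would force the non-decreasing $\Psi$ to be constant on $(0,\infty)$, contradicting $\Psi(0^+)=0\neq H(s_i)$ (this quietly uses continuity of $\Psi$ at $0^+$, which holds and is worth stating). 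For the contradiction, the paper invokes the Lagrange mean value theorem on $[\hat p_i,+\infty]$ to find a point of negative slope and then uses the first-order supporting-line inequality to force $R_{s,i}(+\infty)=-\infty$; your three-point chord inequality with $p_3\to\infty$ reaches the same contradiction without assuming differentiability anywhere, and in fact sidesteps the slight rigor issue in applying the mean value theorem on an unbounded interval (the paper's step is repairable by truncating to $[\hat p_i,M]$, but yours needs no repair). Your explicit handling of the $[\cdot]^+$ operator (monotonicity of $\Psi$ plus $|h_i|^2>|g_i|^2$ makes it inactive) matches what the paper relegates to a footnote, and your separability reduction to a single coordinate is the same implicit step the paper takes. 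What the paper's I-MMSE route buys instead is quantitative derivative information that is reused in the proof of Proposition 2 (the Lipschitz bounds (\ref{equ:derivative_R_p})--(\ref{equ:derivative_R_gi})), so its machinery is not wasted; your route buys minimal hypotheses. One shared caveat: both proofs tacitly assume at least one sub-carrier with $|h_i|^2>|g_i|^2$, since otherwise $R_s\equiv 0$ is (trivially) concave; and your flagged ``main obstacle''---saturation $\Psi(\gamma)\to H(s_i)$ at the same level on both links---is standard for finite constellations and is asserted without proof in the paper as well, so it does not constitute a gap relative to the paper's own standard of rigor.
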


\begin{proof}
When $p_i=0$, one can derive $I(s_i;y_i)=I(s_i;z_i)=0$; when
$p_i=+\infty$, we have $I(s_i;y_i)=I(s_i;z_i)=H(s_i)$, where $H(x)$
is entropy of $x$. Therefore, $R_{s,i}({0})=R_{s,i}(+\infty)=0$.

According to \cite{Guo05},
\begin{equation}\label{equ:derivative}
\frac{\partial{I(s;\sqrt{p}s+n)}}{\partial{p}}=\texttt{MMSE}(p).
\end{equation}
$\texttt{MMSE}(p)$ is defined as
\begin{eqnarray}\label{equ:MMSE}
&& \texttt{MMSE}(p)\triangleq
\mathbb{E}[|s-\mathbb{E}(s|\sqrt{p}{s}+n)|^2],
\end{eqnarray}
where $\mathbb{E}[x]$ is the expectation of random variable $x$;
$\mathbb{E}[x|y]$ is the conditional expectation of $x$ for given
$y$. By (\ref{equ:derivative}), the derivative of $R_{s,i}({p_i})$
at $p_i=0$ is given by\footnote{Only the sub-carriers that satisfy
$|h_i|^2>|g_i|^2$ are considered, as $R_{s,i}(p_i)\equiv 0$ for
those sub-carriers with $|h_i|^2\leq|g_i|^2$, which do not affect
the concavity of $R_s(\textbf{p})$.}
\begin{equation}
R_{s,i}'({p_i})|_{{p_i}={0}}={\left[|h_i|^2-|g_i|^2\right]^+}>0,
\end{equation}
which indicates that there must exist a $\hat{p_i}>0$ that
$R_{s,i}(\hat{p_i})>0$. 
According to the Lagrange's mean value theorem \cite{Jeffreys1988},
it must have a point $\tilde{p_i}\in [\hat{p_i},+\infty]$ with
negative slop $R_{s,i}'(\tilde{p_i})<0$.

Assume $R_{s,i}(p_i)$ is concave, then the inequality
$R_{s,i}(p_i)\leq R_{s,i}(\tilde{p_i}) +
R_{s,i}'(\tilde{p_i})(p_i-\tilde{p_i})$ holds \cite{Boyd2003}, which
indicates $R_{s,i}(+\infty)=-\infty $. This contradicts
$R_{s,i}(+\infty)=0$. Therefore, the concavity assumption is not
true, and Proposition 1 holds.
\end{proof}

Two evidentiary examples are provided to illustrate Proposition 1.

The first example is QPSK. The curves of $I(s_i;y_i)$, $I(s_i;z_i)$
and $R_{s,i}(p_i)$ versus $p_i$ are shown in right part of Fig.
\ref{fig:I_input}, and they are in accordance with the statements in
the proof of Proposition 1.

The second example considers a 4 points PAM constellation with
non-uniform spacing. The probability mass function of $s_i$ is given
by
\begin{equation}\label{dis:mess}
P_{s_i}\sim
\begin{bmatrix}
&-51L &~~~ -50L ~~& ~~~50L ~~~~& 51L \\
&0.25 & 0.25 &0.25 &0.25
\end{bmatrix},
\end{equation}
where $L$ is a normalization parameter to maintain unit variance.
Figure \ref{fig:I_special distributed} shows the secrecy rate
$R_{s,i}(p_i)$ for this case. It is interesting that the
$R_{s,i}(p_i)$ has two peaks. Hence, it is definitely non-concave.
We note that the mutual information $I(s_i;y_i)$ and $I(s_i;z_i)$
are concave with respect to $p_i$ in linear scale \cite{Guo2010}.

\subsection{Optimal Power Allocation Solution of Problem (\ref{problem1})}
Although problem (6) is non-convex, there are still some efficient
algorithms to solve it and obtain near-optimal solutions. One of
them is the Lagrangian duality method \cite{Boyd2003}. Some recent
studies \cite{Luo2009}-\cite{Yu2006} showed that, for certain
non-convex structures, asymptotic optimal performance can be
achieved by this method.

The Lagrangian of problem (\ref{problem1}) is given by
\begin{equation}
L(\textbf{p},u)={\frac{1}{N}\sum_{i=1}^N[I(s_i;h_i\sqrt{p_i}s_i+w_i)-I(s_i;g_i\sqrt{p_i}s_i+v_i)]^++u\left(P-\frac{1}{N}\sum_{i=1}^Np_i\right)},
\end{equation}
where $u$ is Lagrangian dual variable. The corresponding dual
function can then be written as
\begin{equation}\label{pro:inner prob.}
\begin{array}{ll}
g(u)&\triangleq \underset{\textbf{p}\geq 0}\max{~~L(\textbf{p},u)}.
\end{array}
\end{equation}
Hence the dual problem formulation of problem (\ref{problem1}) can
be readily expressed as
\begin{equation}\label{pro:dual}
\begin{array}{ll}
D^*=&\underset{u\geq 0}\min~~g(u),
\end{array}
\end{equation}
where $D^*$ denotes the optimal dual value. 
Since the objective function of primal problem (\ref{problem1}) is
non-concave, there is a positive gap between $R^*$ and $D^*$, i.e.,
$D^*-R^*>0$. However, according to the recent studies of Luo and
Zhang \cite{Luo2009}, \cite{Luo2008}, asymptotic strong duality
holds for problem (\ref{problem1}), i.e. the duality gap $D^*-R^*$
goes to zero as $N\rightarrow\infty$, as is expressed in the
following proposition:

\begin{proposition}
If the channel coefficients $g_i$ and $h_i$ are Lipschitz continuous
and bounded in the sense
\begin{equation} \label{equ:hi_lipschitz}
\big| |h_i|-|h_j| \big| \leq L_h \frac{|i-j|}{N}, \forall~
i,j\in\{1,2,...,N\}
\end{equation}
\begin{equation}\label{equ:gi_lipschitz}
\big||g_i|-|g_j|\big| \leq L_g \frac{|i-j|}{N}, \forall~
i,j\in\{1,2,...,N\}
\end{equation}
where $L_h,L_g>0$ is the Lipschitz constant. Then we have
\begin{equation}
0\leq D^*-R^*\leq O\left(\frac{1}{\sqrt{N}}\right).
\end{equation}
\end{proposition}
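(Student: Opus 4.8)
The plan is to verify that problem (\ref{problem1}) fits the separable non-convex framework of Luo and Zhang \cite{Luo2009,Luo2008} and then to bound the duality gap by exhibiting a feasible primal point whose value is within $O(1/\sqrt N)$ of the dual optimum $D^*$. The lower bound $0 \le D^* - R^*$ is immediate from weak duality, so the whole task is the upper bound; I would obtain it by constructing $\tilde{\mathbf p}$, feasible for (\ref{problem1}), with $R_s(\tilde{\mathbf p}) \ge D^* - O(1/\sqrt N)$, whence $R^* \ge R_s(\tilde{\mathbf p}) \ge D^* - O(1/\sqrt N)$.

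First I would record the regularity of each ingredient $R_{s,i}$. For a finite constellation of size $M$ the bound $0 \le R_{s,i}(p_i) \le H(s_i) \le \log_2 M$ holds uniformly. By the I-MMSE identity (\ref{equ:derivative}) each $R_{s,i}$ is continuously differentiable, and since $\texttt{MMSE}(\cdot) \le \mathrm{Var}(s_i) = 1$ while $|h_i|^2,|g_i|^2$ are bounded, the derivatives $R_{s,i}'$ are uniformly bounded. Because $R_{s,i}$ depends on the subcarrier only through $|h_i|$ and $|g_i|$, the channel Lipschitz bounds (\ref{equ:hi_lipschitz})-(\ref{equ:gi_lipschitz}) are inherited: $|R_{s,i}(p) - R_{s,j}(p)| \le C\,|i-j|/N$ for a constant $C$ depending on $L_h,L_g$ and the bounds above, and the same holds for the upper concave envelopes $\widehat R_{s,i}$ of $R_{s,i}$. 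I would then use the standard fact that, for a separable program with a single coupling constraint, the Lagrangian dual value coincides with the optimal value $\bar R$ of the convex relaxation in which each $R_{s,i}$ is replaced by $\widehat R_{s,i}$, so that $D^* = \bar R$.

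The crux is to realize $\bar R$ by a genuinely feasible point for the non-concave problem, and this is where the Lipschitz hypotheses are essential: non-concavity forbids interpolating power within a single subcarrier, so intermediate values on a linear ``gap'' segment of an envelope can only be produced by frequency sharing across several near-identical subcarriers. I would partition the $N$ subcarriers into $K$ consecutive blocks of size $N/K$ and, on each block, fix a representative channel. On any gap segment the envelope is the chord joining two contact points $p^{\mathrm{lo}},p^{\mathrm{hi}}$ at which $\widehat R_{s,i} = R_{s,i}$; assigning the high and low powers to a fraction $\theta$ and $1-\theta$ of the block's subcarriers reproduces the chord value $\theta R_{s,i}(p^{\mathrm{hi}}) + (1-\theta) R_{s,i}(p^{\mathrm{lo}}) = \widehat R_{s,i}(\theta p^{\mathrm{hi}} + (1-\theta)p^{\mathrm{lo}})$ on average, using only genuine values of the original $R_{s,i}$. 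Choosing the block fractions $\{\theta_b\}$ so that the averaged power equals $P$ then gives a feasible $\tilde{\mathbf p}$ that, block by block, recovers the concavified objective.

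Two error sources remain, and balancing them produces the stated rate. By (\ref{equ:hi_lipschitz})-(\ref{equ:gi_lipschitz}) the channel gains vary by at most $O(1/K)$ across a block, so replacing the true $R_{s,i}$ by the block representative costs $O(1/K)$ in the averaged objective; and since each block can realize its target power only up to the resolution $1/(N/K)=K/N$ forced by integer rounding of $\theta_b (N/K)$, the rounding of at most one subcarrier per block costs $O(K/N)$ overall. Balancing $O(1/K)$ against $O(K/N)$ gives the optimal block count $K=\Theta(\sqrt N)$ and total loss $O(1/\sqrt N)$, establishing $0 \le D^* - R^* \le O(1/\sqrt N)$. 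The main obstacle I anticipate is precisely this feasibility-restoration step: making the frequency-sharing construction simultaneously (i) match the global power budget closely enough and (ii) keep the per-block representative error controlled, which is exactly the trade-off that the Lipschitz continuity of the channel is introduced to resolve.
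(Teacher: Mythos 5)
Your proposal is correct, and its first half coincides with what the paper actually does: the paper also uses the I-MMSE relation \eqref{equ:derivative} to bound $\partial R_{s,i}/\partial p$ by $|h_i|^2$ and $\partial R_{s,i}/\partial |h_i|$, $\partial R_{s,i}/\partial |g_i|$ by $2p|h_i|$, $2p|g_i|$, and combines these with \eqref{equ:hi_lipschitz}--\eqref{equ:gi_lipschitz} and the mean value theorem to obtain the joint Lipschitz estimate $|R_{s,i}(p)-R_{s,j}(p')|\leq L\left(|i-j|/N+\|p-p'\|_\infty\right)$. Where you genuinely diverge is what happens next: the paper stops there and invokes Theorem 2 of Luo and Zhang \cite{Luo2009} as a black box to conclude $0\leq D^*-R^*\leq O(1/\sqrt{N})$, whereas you re-derive the content of that theorem from scratch --- the identification $D^*=\bar{R}$ with the concavified relaxation (valid here because the problem is separable with a single coupling constraint, so concavification leaves the dual function unchanged and the relaxed program has zero gap), followed by the feasibility-restoration construction: partition into $K$ blocks, frequency-share the two contact powers $p^{\mathrm{lo}},p^{\mathrm{hi}}$ within each block to realize chord values of the envelope with genuine values of $R_{s,i}$, and balance the representative-channel error $O(1/K)$ against the rounding error $O(K/N)$ at $K=\Theta(\sqrt{N})$. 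This is essentially the internal mechanism of the cited theorem, so your route buys self-containedness and makes transparent exactly why the index-Lipschitz hypotheses are imposed (a non-concave $R_{s,i}$ cannot be ``mixed'' within one subcarrier, so envelope values must be synthesized across near-identical neighboring subcarriers), while the paper's route buys brevity and lets the citation absorb the technical debris you correctly flag as the hard part: uniform boundedness of the contact powers (note $R_{s,i}(p)\to 0$ as $p\to\infty$, so the envelope is flat past the argmax and powers in the construction can be capped there), Slater's condition for the relaxed program, and the per-block power bookkeeping under integer rounding. One caveat applies equally to both arguments: the paper's step \eqref{equ:R_lipschis_i} quietly assumes $p$ bounded to extract a uniform constant $L$, and your block construction needs the same boundedness for the $O(1/K)$ and $O(K/N)$ estimates to be uniform, so on that point neither version is more rigorous than the other.
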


\begin{proof} According to (\ref{equ:derivative}) and
(\ref{equ:MMSE}), we have \cite{Guo05}
\begin{equation}\label{equ:derivative_bound}
0 \leq
\frac{\partial{I(s;\sqrt{p}s+n)}}{\partial{p}}=\texttt{MMSE}(p)\leq
\mathbb{E}[|s|^2]=1,
\end{equation}
which implies that
\begin{equation}\label{equ:derivative_R_p}
\frac{\partial{R_{s,i}(p)}}{\partial{p}}=|h_i|^2\texttt{MMSE}(|h_i|^2p)-|g_i|^2\texttt{MMSE}(|g_i|^2p)\leq
|h_i|^2,
\end{equation}
\begin{equation}\label{equ:derivative_R_hi}
\frac{\partial{R_{s,i}(p)}}{\partial{|h_i|}}=2p|h_i|\texttt{MMSE}(|h_i|^2p)\leq
2p|h_i|,
\end{equation}
\begin{equation}\label{equ:derivative_R_gi}
\frac{\partial{R_{s,i}(p)}}{\partial{|g_i|}}=2p|g_i|\texttt{MMSE}(|g_i|^2p)\leq
2p|g_i|,
\end{equation}
where $R_{s,i}(p)$ is defined in the expression of
(\ref{equ:gradients of R}).

According to equation (\ref{equ:derivative_R_p}) and the Lagrange's
mean value theorem \cite{Jeffreys1988}, we have
\begin{equation}\label{equ:R_lipschis_p}
|R_{s,i}(p)-R_{s,i}(p')|\leq |h_i|^2||p-p'||_\infty.
\end{equation}
On the other hand, combining (\ref{equ:derivative_R_hi}),
(\ref{equ:derivative_R_gi}), (\ref{equ:derivative_bound}),
(\ref{equ:hi_lipschitz}) and (\ref{equ:gi_lipschitz}), using chain
rule and the Lagrange's mean value theorem \cite{Jeffreys1988}, we
have
\begin{equation}\label{equ:R_lipschis_i}
|R_{s,i}(p)-R_{s,j}(p)|\leq \left(2p|h_i|L_h+2p|g_i|L_g\right)
\frac{|i-j|}{N}.
\end{equation}
As $|h_i|$, $|g_i|$ and $p$ are bounded, according to
(\ref{equ:R_lipschis_p}) and (\ref{equ:R_lipschis_i}), there must
exist an $L$ such that
\begin{equation}
|R_{s,i}(p)-R_{s,j}(p')|\leq
L\left(\frac{|i-j|}{N}+||p-p'||_\infty\right)
\end{equation}
i.e., the secrecy rate $R_{s,i}(p)$ is Lipschitz continuous. Hence
according to Theorem 2 of \cite{Luo2009}, the duality gap between
$D^*$ and $R^*$ vanishes asymptotically in the order of
$1/\sqrt{N}$, which is expressed as
\begin{equation}
0\leq D^*-R^* \leq O(\frac{1}{\sqrt{N}}),
\end{equation}
and Proposition 2 holds.
\end{proof}

The procedures to solve (\ref{pro:inner prob.}) and (\ref{pro:dual})
are provided in the following.

\begin{table}
\caption{} \label{tab1} \centering
\begin{tabular}{l}
\hline Algorithm : Lagrangian dual optimization method\\
\hline Initialize $u$\\
\emph{repeat}\\
~~~~for i=1 to N \\
~~~~~~~ find $p_i=\arg\underset{p_i}\max{\left[[I(s_i;y_i)-I(s_i;z_i)]^+-up_i\right]+uP}$. \\
~~~~end\\
~~~~update $u$ using bisection method.\\[0.1cm]
\emph{until} $u$ converges.\\
\hline
\end{tabular}
\end{table}

For each fixed $u$, problem (\ref{pro:inner prob.}) can be decoupled
into $N$ independent sub-carrier problems
\begin{align}\label{pro:subp}
g(u)& = \underset{p_i\geq 0}\max{~~L(p_i,u)},\nonumber\\
&=\sum_{i=1}^N\underset{p_i\geq
0}\max{\left[[I(s_i;y_i)-I(s_i;z_i)]^+-up_i\right]+uP}.
\end{align}
While the sub-carrier problem in (\ref{pro:subp}) is still
non-convex, it has only one variable $p_i$ and can be solved by
simple one dimension line search.
 As the dual function $g(u)$ is
convex in $u$, 
its subgradient $g'(u)=P-\frac{1}{N}\sum_{i=1}^Np_i^*$, where
$p_i^*$ is optimal solution for problem (\ref{pro:inner prob.}) with
fixed $u$, is an increasing function in $u$. Hence bisection method
can be used to solve dual problem (\ref{pro:dual}), so that either
$u=0$, $P\geq\frac{1}{N}\sum_{i=1}^Np_i^*$ or $u>0$,
$P=\frac{1}{N}\sum_{i=1}^Np_i^*$ is satisfied.
Table.\ref{tab1} summarizes the algorithm.

The complexity of this algorithm is $N \frac{1}{e_p} \log_2
(\frac{1}{e_d})$, where $e_p$ is the accuracy of one dimension
exhausitive search to solve (\ref{pro:inner prob.}) and $e_d$ is the
accuracy of the bisection search to solve (\ref{pro:dual}). Since
its complexity is linear with respect to the number of sub-carriers
$N$, it is quite convenient for practical large values of $N$, such
as 64$\sim$4096. We note that the complexity of solve
(\ref{problem1}) exhaustively is $\frac{1}{e_p^N}$, which is
exponential in $N$ and thus unrealistic.



\section{Numerical Results}\label{sec:simulation}
\begin{figure}[t]
\centering
\includegraphics[scale=0.6]{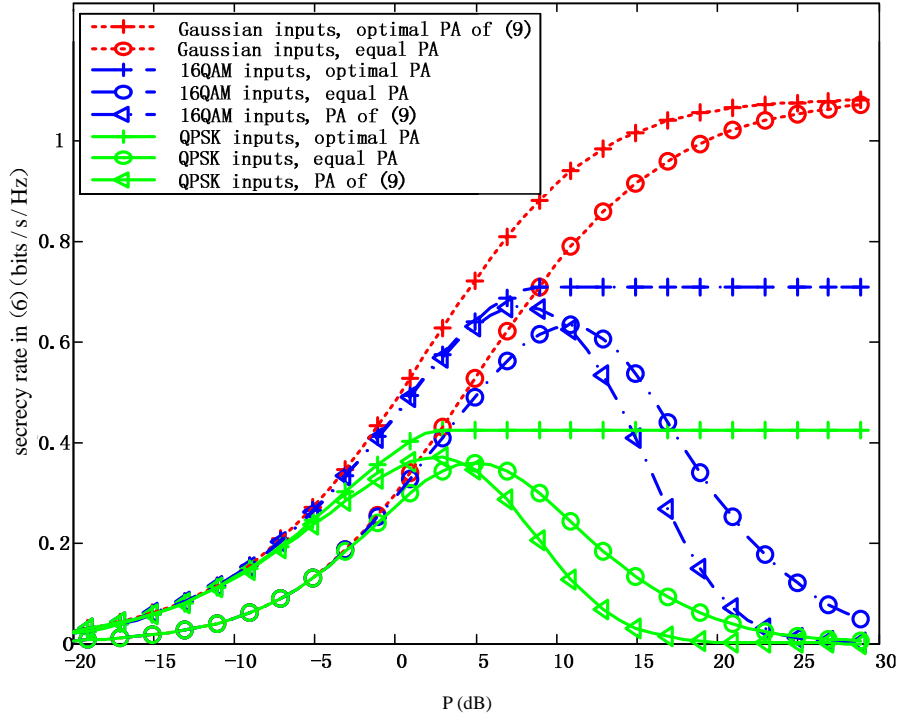}
\caption{The secrecy rate versus total power $P$.}
\label{Fig:comparison}
\end{figure}


In this section, we provide some simulation results to illustrate
the performance of our proposed power allocation algorithm and show
how different channel input distributions affect the secrecy rate
and power allocation results.

\begin{figure}[t]
\centering
\includegraphics[scale=0.43]{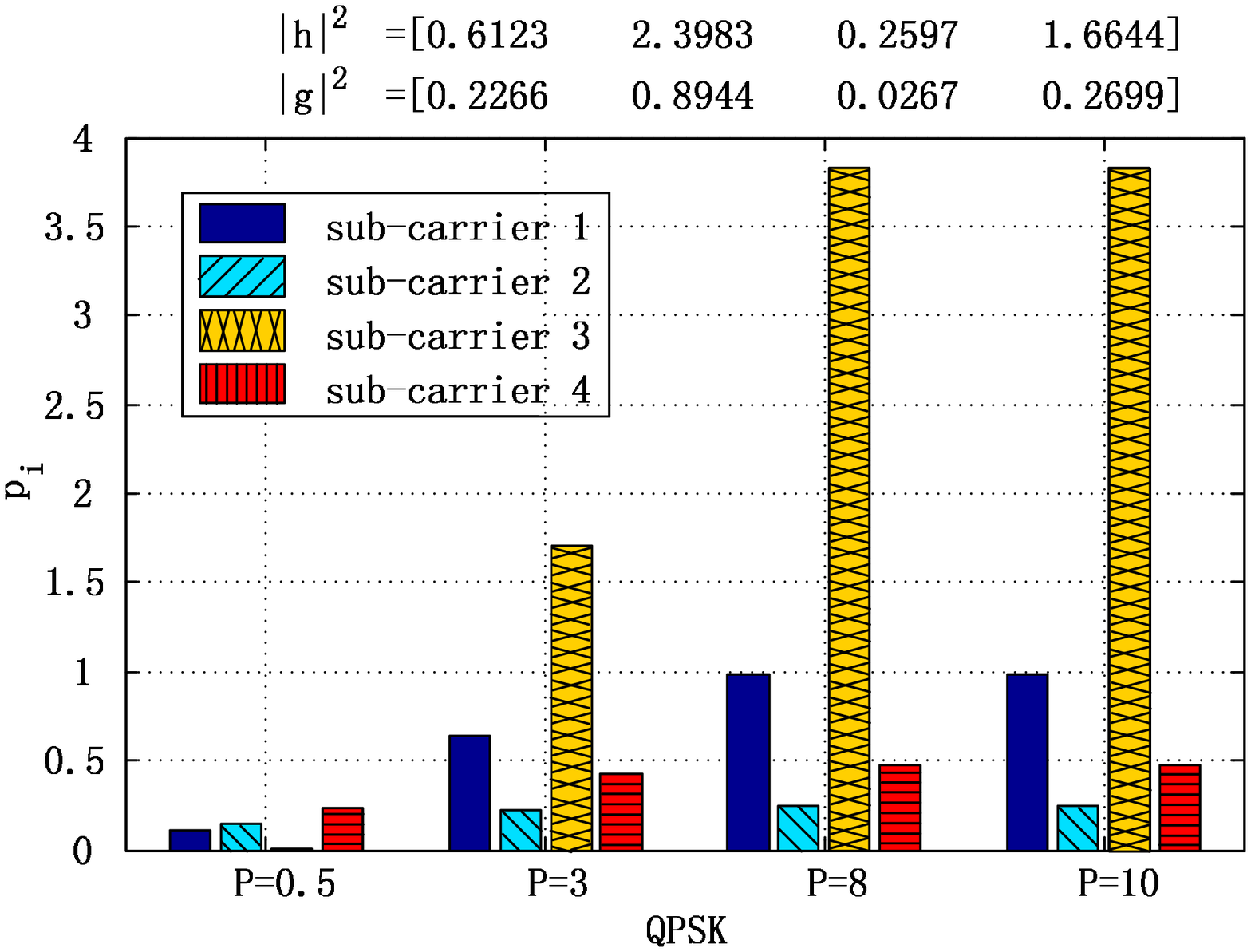}
\caption{Power allocation results versus $P$ under QPSK
inputs.}\label{Fig:bar_QPSK}
\end{figure}

\begin{figure}[t]
\centering
\includegraphics[scale=0.45]{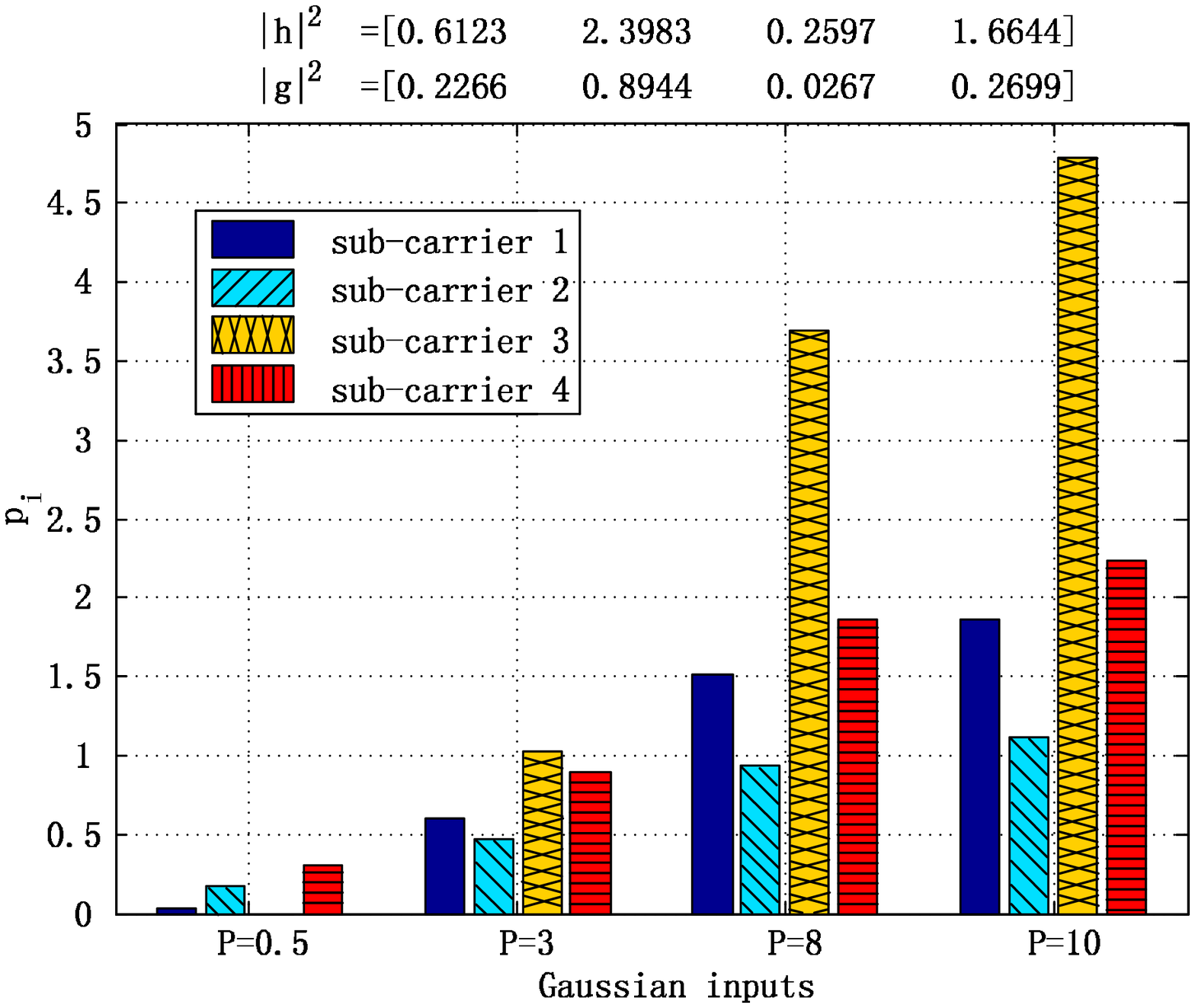}
\caption{Power allocation results versus $P$ under Gaussian
inputs.}\label{Fig:bar_Gau}
\end{figure}

We first consider an OFDM-based secure system with $N=128$
sub-carriers. The secrecy rate versus total power constraint under
different power allocation strategies and channel input
distributions are illustrated in Fig. \ref{Fig:comparison}. Two
reference strategies are considered to compare with our strategy:
the optimal strategy under Gaussian inputs, i.e.,
(\ref{equ:gaussResult}), which is denoted by ``PA of
(\ref{equ:gaussResult})'' in Fig. \ref{Fig:comparison}; the equal
power allocation strategy, which equally allocates total power among
the sub-carriers that satisfy $|h_i|^2>|g_i|^2$ and is denoted by
``equal PA''.

Seen from Fig. \ref{Fig:comparison}, higher secrecy rate can be
achieved for QPSK and 16QAM by our proposed optimal power allocation
strategy, especially when the power constraint $P$ is quite large.
However, equal power allocation and power allocation of
(\ref{equ:gaussResult}) can be quite bad under finite discrete
constellations, and the secrecy rate even drops to zero for large
value of $P$. Actually, when $P$ is large, secrecy rate under
Gaussian inputs can be approximated by
\begin{equation}
\begin{array}{ll}
R_s^G(\textbf{p})& =\frac{1}{N}
\sum_{i=1}^N[\;\log_2(1+{|h_i|^2p_i})-\log_2(1+{|g_i|^2p_i})\;]^+\\[0.25cm]
& \approx\frac{1}{N}
\sum_{i=1}^N[\;\log_2(\frac{|h_i|^2}{|g_i|^2})\;]^+,
\end{array}
\end{equation}
which is independent with power allocation $p_i$. So equal power
allocation works asymptotically optimal under Gaussian inputs when
$P$ is large, as shown in Fig. \ref{Fig:comparison}.


%
%

The power allocation solution of the proposed algorithm is shown in
Fig. \ref{Fig:bar_QPSK} and Fig. \ref{Fig:bar_Gau}, respectively,
under QPSK and Gaussian inputs with $N=4$. When the power constraint
$P$ is small, most transmit power is allocated to the stronger
sub-channels, the channels with larger $|h_i|^2-|g_i|^2$ (Channel 2
and Channel 4 in our simulation example). However, as $P$ grows, the
transmit power allocated to the weak sub-channels grows. Under QPSK
input signals, the transmit power allocated to every sub-channel
should stop increasing when $P$ is very large. But the transmit
power under Guassian input signals keeps increasing.

\section{Conclusion}\label{sec:conclu}
In this paper, we have obtained the optimal power allocation for
OFDM-based wire-tap channels with arbitrarily distributed inputs.
While the secrecy rate achieved by Gaussian distributed channel
inputs is concave with respect to the transmit power, we have found
and rigorously proved that the secrecy rate is non-concave under any
practical finite discrete constellations. A power allocation
algorithm has been proposed, which is asymptotic optimal as the
number of sub-carrier increases. Our numerical results show that
more transmit power may results in a huge loss in secrecy rate,
which is rarely seen in previous power allocation studies. This
indicates that optimal power allocation is quite essential in
practical studies of physical layer security.


\subsubsection{Acknowledge}
The authors would like to thank Dr. Tsung-Hui Chang and Prof.
Shidong Zhou for their valuable suggestions in this paper.

This work is supported by National S\&T Major Project
(2009ZX03002-002), National Basic Research Program of China
(2007CB310608), National S\&T Pillar Program (2008BAH30B09),
National Natural Science Foundation of China (60832008) and PCSIRT,
Tsinghua Research Funding-No.2010THZ02-3. This work is also
sponsored by Datang Mobile Communications Equipment Co., Ltd.



%
\end{document}